\documentclass{iopjournal}
\usepackage{amssymb,amsmath}
\usepackage{ragged2e}
\usepackage{orcidlink}
\renewcommand{\orcid}[1]{\orcidlink{#1}}

\AtBeginDocument{\justifying}
\renewcommand{\headrulewidth}{0pt}
\renewcommand{\footrulewidth}{0pt}
\renewcommand{\articletype}[1]{}

\newtheorem{definition}{Definition}

\newtheorem{theorem}{Theorem}

\newcommand{\VV}{V-V}

\begin{document}

\articletype{Paper}

\title{Diverse Efficiency of Observable Optimization for Four-Level Quantum Systems with Higher-Order Traps}

\author{Alexander~N. Pechen$^{1,*}$\orcid{0000-0001-8290-8300} and 
Boris~O. Volkov$^{1}$\orcid{0000-0002-6430-9125}}

\affil{$^1$Department of Mathematical Methods for Quantum Technologies, Steklov Mathematical Institute of Russian Academy of Sciences, Gubkina str. 8, Moscow 119333, Russia}

\affil{$^*$Author to whom any correspondence should be addressed.}

\email{apechen@gmail.com}

\keywords{quantum control, quantum control landscape, higher-order trap, qudit, GRAPE, GPM}

\vspace{3mm}
\noindent{\fontsize{8}{10}\selectfont\textit{Author's Original / Preprint. Published version:}
A.~N. Pechen and B.~O. Volkov, \textit{Physica Scripta} \textbf{101} (2026) 025101.
DOI: \href{https://doi.org/10.1088/1402-4896/ae2f3d}{10.1088/1402-4896/ae2f3d}.}

\begin{abstract}
\justifying
In this work, we perform an analytical and numerical analysis of quantum landscapes for controlling  special four-level quantum systems for which we prove that the null control is a five-order trap: a $V-V$ system and an anharmonic system. As a control goal, an observable optimization is considered. The rigorous theoretical analysis is followed by the numerical experiments based on the GRadient Ascent Pulse Engineering (GRAPE) algorithm and Gradient Projection Method (GPM), performed to investigate the behavior of the efficiency of optimization for unconstrained (using GRAPE) and constrained (using GPM) controls. As the main result, we observe an interesting phenomenon with a diverse behavior of the optimization efficiency depending on the system Hamiltonian --- sharp increase of the optimization efficiency up to $100\%$ at certain distance from the null control for a $\VV$ system, while much slower and less significant increase (and even small decrease) for a system with the chain interaction.  This sharp difference might be related with the fine structure of the subspace of controls where second derivative of the objective functional is zero.
\end{abstract}

\section{Introduction}

Quantum control is an important area of research with fundamental relevance to determination of the ultimate degree of manipulating quantum systems and with various existing and prospective applications in quantum technologies, including quantum computation, laser assisted chemistry, design of NMR pulse sequences, etc.~\cite{KochEPJ2022,TannorBook2007,UG_BrifNewJPhys2010,Gough2012}. Various optimization methods are used to find optimal controls in the laboratory or numerical experiments, including GRadient Ascent Pulse Engineering (GRAPE) optimization algorithm~\cite{khaneja_optimal_2005}, gradient flows~\cite{Glaser2010}, Krotov method~\cite{Tannor1992}, genetic algorithms for coherent control of closed systems~\cite{Judson1992} and incoherent control of open quantum systems~\cite{PechenRabitzPRA2006}, gradient-free CRAB optimization algorithm ~\cite{CanevaPRA2011}, Hessian-based optimization as in the Broyden--Fletcher--Goldfarb--Shanno (BFGS) algorithm and combined approaches ~\cite{EitanPRA2011,DalgaardPRA2020}, etc. The efficiency of such methods strongly depends on the properties of the underlying quantum control landscape. Quantum control landscapes have multiple applications in physics and chemistry. They were exploited to manipulate the intensity of the Autler-Townes components in the photoelectron spectrum~\cite{Wollenhaupt_Prakelt_Sarpe-Tudoran_Liese_Baumert_2005}, making an experimental implementation for retinal photoisomerization in bacteriorhodopsin~\cite{Marquetand_Nuernberger_Vogt_Brixner_Engel_2007}, manipulating molecular systems~\cite{Ruetzel_Stolzenberger_Fechner_Dimler_Brixner_Tannor_2010}, inducing multi-photon excitations in atoms and vibrational population transfer in molecules~\cite{Palao_Reich_Koch_2013}, discovering the failure of greedy algorithms to generate fast quantum gates~\cite{Zahedinejad_Schirmer_Sanders_2014}, experimentally observing saddle points~\cite{Sun_Pelczer_Riviello_Wu_Rabitz_2015} and analyzing quantum state preparation and entanglement creation~\cite{Li_2023} in two spin quantum systems, dressing chopped-random-basis optimization~\cite{Rach_Muller_Calarco_Montangero_2015}, discovering a discontinuous phase transition with broken symmetry in a two-qubit quantum system~\cite{Bukov_Day_Weinberg_Polkovnikov_Mehta_Sels_2018}, etc.

In this work, we consider control landscapes for four-level quantum systems which evolve under the action of a coherent control $f(t)$ according to the Schr\"odinger equation for the unitary evolution operator $U_t^f$,
\begin{equation}
\label{Shroedinger}
i\frac{dU_t^f}{dt}=(H_0+f(t)V)U_t^f,\, \quad U_{t=0}^f = \mathbb{I}.
\end{equation}
Here $H_0$ and $V$ are the free and interaction Hamiltonians ($4\times 4$ Hermitian matrices). For such systems, we consider quantum control problem of maximizing mean value of some Hermitian observable $O=O^\dagger$ at some final time $T>0$,
\begin{equation}
\label{Objective}
J_O(f)={\rm Tr}[U_T^f\rho_0 U_T^{f\dagger} O]\to\max.
\end{equation}

Quantum control landscape of this problem is the graph of the objective functional $J_O(f)$. Its most important points are global maxima, which represent the desired controls, and traps, which represent controls which locally look like optimal but which are not globally optimal. Properties of the quantum control landscape are important for practical optimization, since presence of traps could impede for the search for the globally optimal controls~\cite{RHR}. For this reason, the analysis of quantum control landscapes attracts high attention. Absence of traps was proved for two-level quantum systems in~\cite{Pechen_Ilin_2014}. Trapping behavior was found for various multi-level quantum systems in~\cite{PechenTannor2011,PechenTannorReply,deFouquieresSchirmer,VolkovPechenUMN}. Properties of quantum control landscapes were investigated in various works, e.g. in~\cite{PechenRabitzEPL2010,PechenBrifPRA2010,Nanduri_Donovan_Ho_Rabitz_2013,Wu_Long_Dominy_Ho_Rabitz_2012,Zahedinejad_Schirmer_Sanders_2014,Rach_Muller_Calarco_Montangero_2015,Kosut_Arenz_Rabitz_2019}. Quantum control landscape for a two-level system near the quantum speed limit was described in~\cite{Larocca2018}. While for simplest quantum systems such as general two-level~\cite{Pechen_Ilin_2014} or special four-level~\cite{deFouquieresSchirmer} quantum systems some analytical results can be obtained, in many situations a numerical analysis of the quantum control landscape has to be performed~\cite{VolkovJPA2021,Dalgaard_Motzoi_Sherson_2022,Volkov_Myachkova_Pechen_2025,Fentaw_Campbell_Caton_2025}. To avoid local extrema, global search methods such as genetic algorithms~\cite{JudsonRabitzPRL1992,PechenRabitzPRA2006}, evolutionary algorithms~\cite{Zahedinejad_Schirmer_Sanders_2014} and other methods could be exploited. Recently, phenomenon of a stronger trapping behavior in three-level quantum systems with symmetry (so called $\Xi$-systems) was discovered analytically and investigated numerically~\cite{Volkov_Myachkova_Pechen_2025}. The order of trap, which is determined by the Taylor expansion of the objective functional \cite{PechenTannor2012,VolkovPechenUMN}, was shown to significantly affect the level of difficulty of practical optimization. 

In the present work, we analytically show that the null control is a five-order trap in quantum control landscapes for special four-level quantum systems: a $V-V$ system and an anharmonic system. Energy level structure of both quantum systems is shown on Fig.~\ref{Fig}. The system on the left we call as $\VV$ since it has a couple of $V$-connected triples of levels: One triple is level $|1\rangle$ connected with two levels $|2\rangle$ and $|3\rangle$, an another triple is level $|4\rangle$ connected with the same two levels; $''-''$ denotes that both upper levels $|2\rangle$ and $|3\rangle$ are concatenated.  The system on the right is an anharmonic system. Beyond theoretical analysis of the properties of quantum control landscapes for general classes of such systems, we perform a numerical analysis for particular examples considering both cases of unconstrained optimization using GRAPE and constrained optimization using GPM. The analysis of the control landscape for unconstrained optimization is based on the GRadient Ascent Pulse Engineering (GRAPE) method which was developed for coherent control of closed and open quantum systems in~\cite{Khaneja_Reiss_Kehlet_Schulte-Herbruggen_Glaser_2005} and for general case of coherent and incoherent control of open quantum systems in~\cite{PetruhanovPechenJPA2023}. The analysis of the control landscape for constrained optimization is based on the fixed step Gradient Projection Method (GPM) which was developed for coherent and incoherent control of open quantum systems using parametrization of Kraus maps by points of Stiefel manifolds~\cite{Oza2005} and for dynamic quantum control in~\cite{MorzhinPechenJPA2025}, and on the method of the analysis of constrained quantum control landscapes  using GPM developed in~\cite{PechenLJM2025}. As the main result, we observe a very interesting phenomenon  when optimization efficiency behaves differently depending on the system Hamiltonian. For a $\VV$ system we observe sharp increase of the optimization efficiency up to $100\%$ at certain distance from the null control. For the system with the chain interaction the observed increase in efficiency if much slower and less significant. This sharp difference might be related with the fine structure of the subspace of controls where the second derivative of the objective functional is zero.

The structure of this work is the following. In Sec.~\ref{Sec:2}, some classes of four-level quantum systems with null control which is  a five-order trap are considered and the theoretical analysis is performed culminated in a theorem about existence of five-order traps in such systems. In Sec.~\ref{Sec:3}, a numerical analysis using GRAPE of the unconstrained quantum control landscapes in a vicinity of the five-order trap for two examples of such classes of systems is performed, which shows the diverse behavior of the optimization efficiency for the two kinds of systems. In Sec.~\ref{Sec:GPM}, a numerical analysis using GPM of the respective constrained quantum control landscapes for the same systems is performed. Conclusions Sec.~\ref{Sec:4} summarizes the results.

\section{Existence of five-order traps in quantum control landscapes for the four-level systems}\label{Sec:2}

Consider objective functional~(\ref{Objective}) with target observable of the general form $O=\mathrm{diag}\{\lambda_1,\lambda_2,\lambda_3,\lambda_4\}$ with $\lambda_1>\lambda_4$,
$\lambda_4>\lambda_2$ and $\lambda_4>\lambda_3$ (in the simulations we take $O=\mathrm{diag}\{1,-\lambda,-\lambda,0\}$ with $\lambda=6$). As the initial state of the system consider the pure state $
\rho_0=|4\rangle \langle 4|$.
Following~\cite{PechenTannor2012,VolkovPechenUMN,Volkov_Myachkova_Pechen_2025}, consider the free Hamiltonian $H_0=\mathrm{diag}(h_1,h_2,h_3,h_4)$, and the interaction Hamiltonian $V$ of the form
\begin{equation}
\label{V}
V=\left(\begin{array}{cccc}
0& v_{12} & v_{13} & 0 \\
v^\ast_{12} &  0 & v_{23}  & v_{24} \\
v^\ast_{13}& v^\ast_{23}& 0 &  v_{34} \\
0& v^\ast_{24}& v_{34}^\ast & 0
\end{array}
\right).
\end{equation}
An example class of such Hamiltonians is shown on the left part of Fig.~\ref{Fig}. On the right  part of Fig.~\ref{Fig}, we also show an example of a four-level quantum system with the chain Hamiltonian which, as was established in~\cite{VolkovPechenUMN}, has trap of fifth order. 

\begin{figure}
    \includegraphics[width=\linewidth]{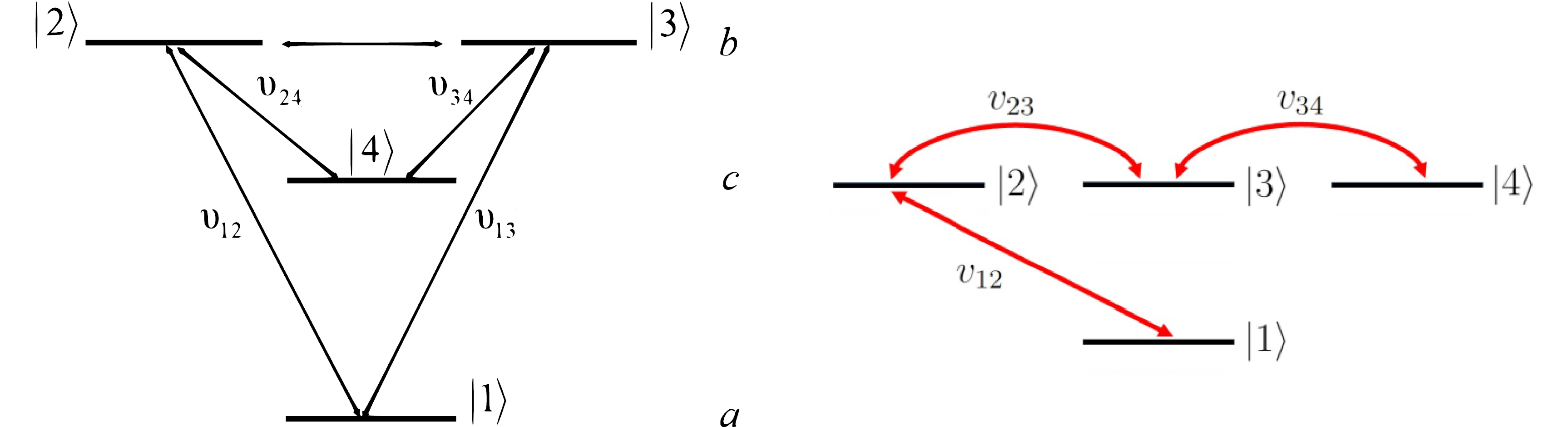}
    \caption{Left: Energy level structure of an example class of four-level quantum systems with the free Hamiltonian $H_0={\rm diag}(a,b,b,c)$ and with allowed transitions $|1\rangle\leftrightarrow|2\rangle$, $|1\rangle\leftrightarrow|3\rangle$, $|2\rangle\leftrightarrow|3\rangle$,  $|2\rangle\leftrightarrow|4\rangle$, and $|3\rangle\leftrightarrow|4\rangle$. We denote this system as a $\VV$ since it has a couple of $V$-connected triples of levels: One triple is level $|1\rangle$ connected with two levels $|2\rangle$ and $|3\rangle$, an another triple is level $|4\rangle$ connected with the same two levels; $''-''$ indicates that both upper levels $|2\rangle$ and $|3\rangle$ are connected.    
    Right: Energy level structure of a four-level quantum system with the chain interaction.}
    \label{Fig}
\end{figure}

For the analytical study we consider the control space $\mathfrak{H}^{0}\colon=L_2([0,T],\mathbb{R})$, so that coherent control in~(\ref{Shroedinger}) is  $f\in \mathfrak{H}^{0}$.
We also assume that the quantum system $(H_0,V)$ is controllable, i.e. there exists a time $T_{0}>0$ such that for any time $T > T_{0}$ and for any $U \in U(4)$ there exists a control $f \in L_2([0,T],\mathbb{R})$ such that $U = e^{i \alpha} U_T^f$ for some $\alpha \in [0, 2\pi)$. For the selected above~$\rho_0$ and~$O$, the complete controllability of the quantum system  $(H_0,V)$  implies that the control $f_0\equiv0$ is not a point of global extrema of~$J_O$ for any $T>T_{0}$. It was shown in~\cite{PechenTannor2012} that for such $(H_0,V)$,~$\rho_0$ and~$O$  the null control $f_0\equiv0$ is a trap of at least the second order, i.e. it is the critical point of the objective functional with  a positive semi-definite Hessian. Below we investigate some sufficient conditions for this trap to have order at least 5.

\begin{definition}
A control $f \in \mathfrak{H}^{0}$ is a trap of $n$-th order for the objective functional $J_O$, if $f$ is not a  global maximum point of $J_O$ and the Taylor expansion of the objective functional at the point $f$ has the form 
\[
J_O(f + \delta f) = J_O(f) + \sum\limits_{j=2}^n \frac{1}{j!}J_O^{(j)}(f)(\delta f, \dots, \delta f) + o(\|\delta f\|^n), \quad \|\delta f\| \rightarrow 0,
\]
where the polynomial $R(\delta f) = \sum\limits_{j=2}^{n}\frac{1}{j!} J_O^{(j)}(f)(\delta f, \dots, \delta f)$ satisfies the conditions:
\begin{enumerate}
\item  $R\neq0$;
\item for any $\delta f \in \mathfrak{H}^{0}$ there exists $\epsilon>0$ such that $R(t\delta f) \leq 0$ for all $t \in (-\epsilon, \epsilon)$.
\end{enumerate}
\end{definition}

Let $V_t:=e^{itH_0}Ve^{-itH_0}$. We can express the evolution operator $U_T^{f}$ for any $f\in \mathfrak{H}^{0}$ via the Dyson series as
\begin{equation}
\label{DysonSeries}
    U_T^{f}= e^{-iTH_0} \left(\mathbb{I} + \sum\limits_{n=1}^{\infty} (-i)^n \int\limits_0^T dt_1 \dots \int\limits_0^{t_{n-1}}dt_n  f(t_1) \dots f(t_n) V_{t_1} \dots V_{t_n}\right).
\end{equation}
By substituting expression~(\ref{DysonSeries}) into the objective functional $J_O(f)={\rm Tr} [U_T^f\rho_0 U_T^{f\dagger} O]$, we obtain the Taylor expansion of this objective functional at the null control $f_0$,
$J_O(f_0+\delta f)=\sum\limits_{n=0}^\infty \frac 1{n!}\ J^{(n)}_O(f_0)(\delta f,\ldots,\delta f)$. Consider the form $A^n_{lk}\colon \mathfrak{H}^0 \to\mathbb{C}$ of the order $n\in\mathbb  N$ defined as
\[
A^n_{lk}\langle f\rangle:=\displaystyle\int_0^Tdt_1\displaystyle\int_0^{t_1}dt_2\ldots
\displaystyle\int_0^{t_{n-1}}dt_n\,f(t_1)\cdots f(t_n)\langle l|V_{t_1}\cdots
V_{t_n}|k\rangle.
\]
For $n=0$ let $A^0_{lk}=\delta_{lk}$, where $\delta_{lk}$~is the Kronecker symbol. Direct calculations shows that the Fr\'echet differential of $(2n-1)$-th order of the objective functional $J_O$ at $f_0$ has the form
\begin{eqnarray}
\label{J(2n-1)}
\frac 1{(2n-1)!}J^{(2n-1)}_O(f_0)(\delta f,\ldots,\delta f)\nonumber\\=\sum_{l=1}^3\sum_{m=0}^{n-1}(\lambda_l-\lambda_4)(-1)^{n+m}i\left(\overline{A^{m}_{l4}\langle \delta f \rangle}A^{2n-1-m}_{l4}\langle \delta f \rangle-A^{m}_{l4}\langle \delta f \rangle \overline{A^{2n-1-m}_{l4}\langle \delta f \rangle}\right).
\end{eqnarray}
The Fr\'echet differential of $2n$-order of the objective functional $J_O$ at $f_0$ has the form
\begin{eqnarray}
\label{J(2n)}
\frac 1{(2n)!}J^{(2n)}_O(f_0)(\delta f,\ldots,\delta f)
=\sum_{l=1}^3(\lambda_l-\lambda_4)|A^{n}_{l4}\langle \delta f\rangle|^2\\+\sum_{l=1}^3\sum_{m=0}^{n-1}(\lambda_l-\lambda_4) (-1)^{n+m}(A^{m}_{l4}\langle \delta f \rangle \overline{A^{2n-m}_{l4}\langle \delta f \rangle}+\overline{A^{m}_{l4}\langle \delta f \rangle}A^{2n-m}_{l4}\langle \delta f \rangle).
\end{eqnarray}

For $n=1$, formula~(\ref{J(2n-1)}) implies that $J'_O(f_0)=0$. Since $A^1_{14}=0$, formula~(\ref{J(2n)}) implies that the second derivative of the objective functional is
\[
\frac 1{2!}J_O^{(2)}(f_0)(\delta f,\delta f)=(\lambda_3-\lambda_4)|A^1_{34}\langle \delta f\rangle|^2+(\lambda_2-\lambda_4)|A^1_{24}\langle \delta f\rangle|^2\leq 0.
\]
Thus, $f_0$ is indeed a trap of at least of the second order, as it was shown in~\cite{PechenTannor2011}. Consider the linear space of controls
\[
\mathfrak{H}^{1}
:=\{f \in \mathfrak{H}^{0}\colon  A^{1}_{34}\langle f \rangle=0, A^{1}_{24}\langle f \rangle=0 \}=\{f\colon J_O^{(2)}(f_0)(f,f)=0\}.
\] 
Since $A^1_{14}=0$, for $n=2$ formula~(\ref{J(2n-1)})  implies that the third order Fr\'echet differential vanishes on $\mathfrak{H}^{1}$
and for all $\delta f \in\mathfrak{H}^{1}$ holds that
\[
\frac{1}{4!} J_O^{(4)}(f_0)(\delta f,\ldots,\delta f)=(\lambda_1-\lambda_4)|A^2_{14}\langle \delta f\rangle|^2+(\lambda_3-\lambda_4)|A^2_{34}\langle \delta f\rangle|^2+(\lambda_2-\lambda_4)|A^2_{24}\langle \delta f\rangle|^2.
\]

The following theorem provides sufficient conditions for a trap to be at least of the fifth order.
\begin{theorem}
Let $H_0=\mathrm{diag}(a,b,b,c)$, i.e. $h_1=a$, $h_2=h_3=b$, $h_4=c$ (see~Fig.~\ref{Fig}), and $V$ having the form~(\ref{V}) with
\begin{equation}
\label{condvv}
v_{13}v_{34} +v_{12}v_{24}=0.
\end{equation}
Then the null control $f_0\equiv 0$ is a trap of at least of fifth order.
\end{theorem}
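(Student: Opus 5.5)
The plan is to check the two conditions of the Definition directly for the Taylor polynomial $R(\delta f)=\sum_{j=2}^{5}\frac1{j!}J_O^{(j)}(f_0)(\delta f,\dots,\delta f)$, using formulas~(\ref{J(2n-1)}) and~(\ref{J(2n)}). That $f_0$ is not a global maximum is already known from controllability for $T>T_0$, as noted before the Definition.

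\textbf{Step 1: $A^2_{14}$ vanishes identically.} Write $V_t$ in components, $\langle l|V_t|k\rangle=v_{lk}e^{it(h_l-h_k)}$. Since $v_{14}=0$ and $h_2=h_3=b$,
\[
A^2_{14}\langle f\rangle=\left(v_{12}v_{24}+v_{13}v_{34}\right)\int_0^Tdt_1\int_0^{t_1}dt_2\,f(t_1)f(t_2)e^{it_1(a-b)}e^{it_2(b-c)} .
\]
The intermediate levels $|2\rangle,|3\rangle$ carry the same phase, so both paths produce the same double integral. By~(\ref{condvv}) the prefactor is zero, hence $A^2_{14}\langle f\rangle=0$ for \emph{all} $f\in\mathfrak{H}^0$. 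Also $A^1_{14}=0$ and $A^0_{l4}=\delta_{l4}=0$ for $l\le3$.

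\textbf{Step 2: signs of the coefficients along a ray.} Fix $\delta f$ and write
\[
R(t\delta f)=a_2t^2+a_3t^3+a_4t^4+a_5t^5 .
\]
We know $a_2\le0$.
\begin{itemize}
\item If $a_2<0$, it dominates for small $|t|$ and we are done.
\item If $a_2=0$, then $\delta f\in\mathfrak{H}^1$, so $A^1_{24}\langle\delta f\rangle=A^1_{34}\langle\delta f\rangle=0$. By the remark preceding the theorem, $a_3=0$. With Step 1, the fourth-order formula gives
\[
a_4=(\lambda_3-\lambda_4)|A^2_{34}\langle\delta f\rangle|^2+(\lambda_2-\lambda_4)|A^2_{24}\langle\delta f\rangle|^2\le0 .
\]
\item If $a_4<0$, we are done.
\item If $a_4=0$, then $A^2_{24}\langle\delta f\rangle=A^2_{34}\langle\delta f\rangle=0$, and by Step 1 also $A^2_{14}\langle\delta f\rangle=0$. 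Formula~(\ref{J(2n-1)}) with $n=3$ is a sum of terms $A^m_{l4}\overline{A^{5-m}_{l4}}$ and their conjugates, with $m\in\{0,1,2\}$ and $l\le3$. Each such term contains a factor $A^0_{l4}$, $A^1_{l4}$ or $A^2_{l4}$, all of which now vanish. Hence $a_5=0$ and $R(t\delta f)\equiv0$ on this ray.
\end{itemize}
In every case $R(t\delta f)\le0$ for small $|t|$, which is condition~2.

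\textbf{Step 3: $R\neq0$.} It suffices that the second-order part is nonzero. By controllability, not both $v_{24}$ and $v_{34}$ vanish; otherwise $|4\rangle$ would be decoupled. Take, say, $v_{24}\neq0$. Then $A^1_{24}\langle f\rangle=v_{24}\int_0^T f(t)e^{it(b-c)}dt$ is not identically zero on $L_2$. Since $\lambda_2<\lambda_4$, this gives $a_2<0$ for some $\delta f$.

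\textbf{Expected obstacle.} The key insight is Step 1: the degeneracy $h_2=h_3$ together with~(\ref{condvv}) kills the only positive contribution to the fourth-order term. The main point requiring care is the fifth-order bookkeeping in Step 2. One has to verify from~(\ref{J(2n-1)}) that every term with $n=3$ contains a factor $A^m_{l4}$ with $m\le2$, so that it vanishes on the subspace where the second- and fourth-order terms are zero. The remaining steps are routine.
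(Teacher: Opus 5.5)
Your proof is correct and is essentially the paper's argument. The key step in both is that $h_2=h_3$ makes the paths $|4\rangle\to|2\rangle\to|1\rangle$ and $|4\rangle\to|3\rangle\to|1\rangle$ produce the same double integral, so (\ref{condvv}) makes $A^2_{14}$ vanish identically; the fourth-order term is then nonpositive on $\mathfrak{H}^1$, and the fifth-order term vanishes wherever the fourth-order one does. Your ray-by-ray case analysis and explicit check that $R\neq0$ just spell out what the paper phrases through the nested sets $\mathfrak{H}^1\supset\mathfrak{H}^2$; the paper's extra sixth-order computation is not needed for the ``at least fifth order'' claim.
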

\noindent{\bf Proof.}
Under the assumptions of the theorem one has
\[
A^2_{14}\langle f\rangle=(v_{13}v_{34} +v_{12}v_{24})\int_0^1dt_1\int_0^{t_1}dt_2e^{-i(a-b)t_1-i(b-c)t_2}f(t_1)f(t_2)=0.
\]
Hence, for any $\delta f \in\mathfrak{H}^{1}$ it holds that
\[
\frac 1{4!}J_O^{(4)}(f_0)(\delta f,\ldots,\delta f)=(\lambda_3-\lambda_4)|A^2_{34}\langle \delta f\rangle|^2+(\lambda_2-\lambda_4)|A^2_{24}\langle \delta f\rangle|^2\leq 0.
\]
Note that according to the assumptions of the theorem,
the space $\mathfrak{H}^{1}$  has the form
\begin{equation}
\label{H^1}
\mathfrak{H}^{1}=\{f \in \mathfrak{H}^{0}\colon \int_0^Te^{-i(b-c)t}f(t)dt=0\}.
\end{equation}
This space as subspace of $\mathfrak{H}^{0}$ has codimension 2 if $b\neq c$, and codimension 1 if $b=c$.
The  second-order forms  $A^2_{34}$
and $A^2_{24}$ can be rewritten as
\[
A^2_{34}\langle f\rangle=\frac 12 v^\ast_{23} v_{24}\left(\int_0^T\int_0^T K(t_1,t_2)f(t_1)f(t_2)dt_1dt_2\right)
\]
and
\[A^2_{24}\langle f\rangle=\frac 12 v_{23} v_{34}\left(\int_0^T\int_0^TK(t_1,t_2)f(t_1)f(t_2)dt_1dt_2\right),
\]
where 
$$
K(t_1,t_2)=e^{-i(b-c)\min{(t_1,t_2)}}.
$$
Define the set of controls $\mathfrak{H}^{2}=\{f\in \mathfrak{H}^{1}\colon J_O^{(4)}(f_0)(f,\ldots,f)=0\}$. It has the form
$$
\mathfrak{H}^{2}=\{f \in \mathfrak{H}^{1}\colon  
\int_0^T\int_0^T K(t_1,t_2) f(t_1)f(t_2)dt_1dt_2=0\}.
$$
So for any $\delta f\in \mathfrak{H}^{1}\setminus \mathfrak{H}^{2}$ it holds that $J^{(4)}(f_0)(\delta f,\ldots,\delta f)<0$.
Formulas~(\ref{J(2n-1)}) and~(\ref{J(2n)}) for $n=3$ also imply that for any $\delta f\in \mathfrak{H}^{2}$ it holds that $J^{(5)}(f_0)(\delta f,\ldots,\delta f)=0$ 
and
$$
J^{(6)}(f_0)(\delta f,\ldots,\delta f)=(\lambda_1-\lambda_4)|A^3_{14}\langle \delta f\rangle|^2+(\lambda_3-\lambda_4)|A^3_{34}\langle \delta f\rangle|^2+(\lambda_2-\lambda_4)|A^3_{24}\langle \delta f\rangle|^2,
$$
where $(\lambda_1-\lambda_4)>0$. Hence $f_0\equiv 0$ is a trap at least of the fifth order.
This completes the proof.

Now consider in more detail the case  $b=c$, i.e. the case when the free Hamiltonian has the matrix form $H_0=\mathrm{diag}(a,b,b,b)$
and the interaction Hamiltonian $V$ has the matrix form~(\ref{V}) such that relation~(\ref{condvv}) is satisfied. For $v_{13}=v_{24}=0$ such systems were considered in~\cite{VolkovPechenUMN}. In  this case, the sets of controls $\mathfrak{H}^{1}$ and $\mathfrak{H}^{2}$ coincide,
\begin{equation}
\label{H=H}
\mathfrak{H}^{1}=\mathfrak{H}^{2}=\{f \in \mathfrak{H}^{0}\colon \int_0^Tf(t)dt=0\}.
\end{equation}
If $\delta f \in \mathfrak{H}^{0}\setminus \mathfrak{H}^{1}$, then it holds that
$
J^{(2)}(f_0)(\delta f,\delta f)<0
$
and for all $\delta f\in\mathfrak{H}^{1}$ it holds that
$
J^{(k)}(f_0)(\delta f,\ldots,\delta f)=0$ for any $k\in \{2,3,4,5\}$. 
Note that
\begin{eqnarray}
\label{A324}
A^3_{24}\langle f\rangle=v_{12}^\ast(v_{13}v_{34} +v_{12}v_{24})\int_0^1dt_1\int_0^{t_1}dt_2\int_0^{t_2}dt_3e^{i(a-b)(t_1-t_2)}f(t_1)f(t_2)f(t_3)\\+\frac1{6}v_{24}(|v_{24}|^2+|v_{34}|^2+|v_{23}|^2)\left(\int_0^Tf(t)dt\right)^3.
\end{eqnarray}
Because of~(\ref{condvv}) the first term in the right side of~(\ref{A324}) vanishes. Hence, for any $\delta f \in \mathfrak{H}^{1}$ the equality $A^3_{24}\langle \delta f\rangle=0$ and 
similarly  the equality $A^3_{34}\langle \delta f\rangle=0$ are fulfilled, which means that
\[
\frac 1{6!}J^{(6)}(f_0)(\delta f,\ldots,\delta f)=(\lambda_1-\lambda_4)|A^{3}_{14}\langle \delta f\rangle|^2\geq 0,
\]
where
\[
A_{14}^3\langle f \rangle=\frac 16(v_{12}v_{23}v_{34}+
v_{13}v^\ast_{23}v_{24})
\int_0^T\int_0^T\int_0^TR(t_1,t_2,t_3)f(t_1)f(t_2)f(t_3)dt_1dt_2dt_3,
\]
with
\[
R(t_1,t_2,t_3)=e^{-i(a-b)\max(t_1,t_2,t_3)}.
\]
Therefore, $f_0\equiv 0$ is exactly a trap of fifth order. The fine structure of the control subspace~(\ref{H=H}) might be essential for the efficiency of optimization over the quantum control landscape.

\section{GRAPE based analysis of the unconstrained quantum control landscapes}\label{Sec:3}

In this section, we perform a numerical analysis of the control landscapes for some examples of the four-level quantum systems having the null control as a five-order trap. For this analysis, we extend the method of investigation of quantum control landscapes~\cite{PechenTannor2012} which is based on a fixed step GRAPE optimization~\cite{Khaneja_Reiss_Kehlet_Schulte-Herbruggen_Glaser_2005,GRAPE}. While quasi-Newton methods can be much more efficient~\cite{Eitan_Mundt_Tannor_2011}, we use a simple fixed step gradient search to reveal features of the quantum control landscape. 

The first system which we consider is 
\begin{equation}
\label{ExSyst1}
S_1:\quad H_0=\left(\begin{array}{cccc}
-2& 0& 0 & 0 \\
0&  1& 0 & 0 \\
0&  0& 1 & 0 \\
0&  0& 0 & 0 \\
\end{array} \right),\quad 
V=\left(\begin{array}{cccc}
0&  -2 & 1  & 0 \\
-2 & 0 & 1 & 1 \\
1 & 1 & 0 &  2 \\
0 & 1 & 2 & 0 \\
\end{array} \right).
\end{equation}
Its energy level structure and allowed transitions are shown on the left part of Fig.~\ref{Fig}.In~\cite{MyachkovaPechen}, controllability of such  system (up to a renumeration of the energy levels) was investigated and it was found that such system is controllable. The structure of the control space as a chain of embedded sets $\mathfrak{H}^{0}\supset \mathfrak{H}^{1}\supset \ldots$, which determines the order of the trap,  
has a sufficiently complex form for systems like $S_1$. The subspace $\mathfrak{H}^{1}$, on which the second derivative vanishes due to~(\ref{H^1}),  has codimension 2.

The second system is a controlled system with a chain interaction Hamiltonian. Control landscapes for such systems were theoretically investigated in~\cite{VolkovPechenUMN}. The complete controllability of such quantum systems was proved in~\cite{SchirmerFuSolomon}. We consider the  anharmonic system with the following explicit chain Hamiltonian
\begin{equation}
\label{ExSyst2}
S_2:\quad H_0=\left(\begin{array}{cccc}
0& 0 & 0 & 0 \\
0&  1& 0  & 0 \\
0&  0& 1 &  0 \\
0&  0& 0 & 1 \\
\end{array} \right),\quad 
V=\left(\begin{array}{cccc}
0&  -2 & 0  & 0 \\
-2 & 0 & 1 & 0 \\
0 & 1 & 0 &  2 \\
0 & 0 & 2 & 0 \\
\end{array} \right).
\end{equation}
Its energy level structure and allowed transitions are shown on the right part of Fig.~\ref{Fig}. The structure of the control space as a chain of embedded sets $\mathfrak{H}^{0}\supset \mathfrak{H}^{1}\supset \ldots$, which determines the order of the trap,  has a simple form~(\ref{H=H}) for  systems like $S_2$. The subspace $\mathfrak{H}^{1}$ on which the second derivative vanishes  has a codimension 1.

For these systems, we consider the target observable $O=\mathrm{diag}(\lambda_1,\lambda_2,\lambda_3,0)$, where $\lambda_1>0$, $\lambda_2<0$, $\lambda_3<0$ (we explicitly consider $O=\mathrm{diag}(1,-6,-6,0)$) and the initial state $\rho_0=|4\rangle \langle 4|$. For the system $S_2$, which exhibits a novel behavior, we also consider another target observable $O^*=\mathrm{diag}(1,6,-6,0)$. It follows from the results of~\cite{VolkovPechenUMN} that for the target observable $O^*$ and the quantum system $S_2$ the null control $f_0$ is also a fifth order trap for   the objective functional $J_{O^*}$.

The algorithm for the analysis of the control landscapes of the corresponding four-level quantum systems is based on the fixed step GRAPE approach and has the following form. We divide the entire control interval $[0,T]$ into $M$ equal subintervals, each of duration $\Delta t = T/M$, and consider the control as constant on each subinterval. The resulting control is a piecewise constant function of the form 
\[
f_C(t)=\sum_{k=1}^M c_k\chi_{[t_k,t_{k+1}]}(t),
\]
where $C = (c_1,\dots,c_M)\in\mathbb R^M$ is an $M$-dimensional vector, $\chi_{[t_k,t_{k+1}]}(t)$ is the characteristic function of the interval $[t_k,t_{k+1}]$, and $t_k = \Delta t(k-1)$. Vector $C$ is a new, already finite-dimensional, control. The control objective of maximizing the mean value of the observable $O$ then becomes maximizing the function $\mathcal{J}_O\colon \mathbb{R}^M\to \mathbb{R}$ defined as
$\mathcal{J}_O(C) = J_O(f_C)={\rm Tr}[U_T^{f_C} \rho_0(U_{T}^{f_C})^{\dagger}O]$, where $U^{f_C}_T=U_M\dots U_2U_1$ and $U_k = e^{-i(H_0+c_kV)\Delta t}$. Gradient of $\mathcal{J}_O(C)$ is defined by partial derivatives,
\begin{eqnarray*}
{\rm grad}\mathcal{J}_O(C)= \left(\frac{\partial \mathcal{J}_O(C)}{\partial c_1},\ldots,\frac{\partial \mathcal{J}_O(C)}{\partial c_M}\right).
\end{eqnarray*}
We use for the partial derivative with respect to $c_k$ the following linear in $\Delta t$ approximation,
\[
\label{gradient}
\frac{\partial \mathcal{J}_O(C)}{\partial c_k} \approx 2\Delta t \times \rm{Im}\Bigl[{\rm Tr}\Bigl(W_k^{\dagger}VW_k\rho_0W_M^{\dagger}OW_M\Bigr)\Bigr]=({\rm grad}_{\rm lin}\mathcal{J}_O(C))_k,
\]
where $W_k = U_kU_{k-1}\dots U_2U_1$.

Our control landscape analysis algorithm starts with constructing a random initial control $C_1$, whose components are generated with a uniform distribution in the interval $[-l, l]$ with some $l>0$. On $i$-th iteration, we compute the linear approximation ${\rm grad}_{\rm lin}\mathcal{J}_O(C_i)$ of the gradient and update the control as $C_{i+1}=C_{i}+\varepsilon\cdot{\rm grad}_{\rm lin}\mathcal{J}_O(C_i)$, where $\varepsilon>0$ is a fixed small number. The loop continues either until $\mathcal{J}_O$ reaches the fidelity value $J_{\mathrm{stop}}=1-I_{\mathrm{err}}$, where $I_{\mathrm{err}}$ is some threshold describing admissible deviation from the global maximum objective, or until a predefined maximum admissible number of iterations $K_{\mathrm{stop}}$ is reached. 

For simulations we use final time $T=20$, number of control vector components $M=100$, initial control vectors generated randomly with uniform distribution in the hypercube $[-l,l]^M$, maximum admissible number of iterations $K_{\mathrm{stop}} = 2000$, step size $\varepsilon = 0.02$, and admissible deviation from the global maximum objective $I_{\mathrm{err}} = 10^{-3}$. For this $I_{\mathrm{err}}$ we have $J_{\mathrm{stop}}=0.999$. If the algorithm starts from some initial control and can not reach the value $J_{\mathrm{stop}}$ with less than $K_{\mathrm{stop}}$ iterations, we call this run as failed. To get a reasonable statistics, for each $l$ we generate $L=500$ random initial controls in the hypercube $[-l,l]^M$ and compute the fraction of failed runs among the corresponding $500$ GRAPE optimization runs.

The results for the system $S_1$ with Hamiltonian~(\ref{ExSyst1}) are shown on the right subplot of Fig.~\ref{Fig:S1S2}  and in the upper row on Fig.~\ref{Fig:S1S2Grad}. They show that in the small vicinity of the null control higher-order trap, most runs are failed, ($98.8\% $ of runs are failed for $l=0.1$ and $88\%$ for $l=0.2$). With increase of $l$ the number of failed runs rapidly decreases to zero for $l=1.4$. This behavior is similar to the case of $\Lambda$-atom~\cite{Volkov_Myachkova_Pechen_2025}. On Fig.~\ref{Fig:S1S2Grad} (upper row, left subplot) we show for the system $S_1$ for $l=1$, so that when almost all runs are successful, an example of the behavior of norm of the approximate gradient (i.e., $\|{\rm grad}_{\rm lin}\mathcal{J}_O(C_i)\|$) of the objective and objective value vs iteration number of the algorithm. Objective value monotonically increases towards $J_{\mathrm{stop}}=0.999$. The obtained optimal control is quite complicated (right subplot).

The results for the system $S_2$ with the chain Hamiltonian~(\ref{ExSyst2}), which are done for target observables $O$ and $O^*$, are shown on Fig.~\ref{Fig:S1S2} and in the bottom row on Fig.~\ref{Fig:S1S2Grad}. They are different from the case of  the system $S_1$. We see that similarly to the case with the system $S_1$, in a small vicinity of the higher-order trap the fraction of the failed runs for both target observables $O$ and $O^*$ is close to $100\%$. The fraction of failed runs decreases to its minimum around $l\approx 2$. However, even at the minimum value of the fraction of failed runs is quite large, it is about $20\%$. This behavior is in contrast to the behavior found for three-level $\Lambda$-type systems in~\cite{Volkov_Myachkova_Pechen_2025}. As another interesting phenomenon, we observe after the minimum around $l\approx 2$ a slight increase of the fraction of failed runs for large values of $l$. For $l=2$, when most runs are successful, on Fig.~\ref{Fig:S1S2Grad} (bottom row, left subplot) we show an example of the behavior of the norm of the approximate gradient of the objective function and of the objective value vs iteration number of the algorithm. Objective value oscillates in the beginning and then increases towards $J_{\mathrm{stop}}=0.999$. The obtained optimal control is quite complicated (right subplot).

The main conclusion from this analysis is that optimization behavior is essentially different for the two considered kinds of four-level quantum systems. For a $\VV$ system we observe sharp increase of the optimization efficiency up to $100\%$  at certain distance with $l\approx 1$ from the null control (less value of the fraction of failed runs corresponds to a higher value of the optimization efficiency; zero fraction of failed runs corresponds to $100\%$ optimization efficiency). And moreover, the optimization efficiency increases from $10\%$ to $100\%$ in a short range of $l=0.2$ to $l=1$. For the system with the chain interaction the observed increase in efficiency is less significant: it goes up to about $80\%$ and then even begins to slowly decrease, and much slower in the initial increase: it goes from $10\%$ to $80\%$ for from $l=0.2$ to $l=1.5$. 

A possible origin of this effect might be related to the structure of the subspace where second derivative of the objective functional vanishes. For a strongly degenerate system $S_2$ with a chain Hamiltonian, the control subspace $\mathfrak{H}^{1}$ on which the second derivative of the objective functional is zero has codimension 1. For the $\VV$ system, such subspace has codimension 2. In the paper~\cite{Volkov_Myachkova_Pechen_2025}, the behavior of the search algorithm in the case of three-level systems was similar to the case of  the $\VV$ system, while the algorithm running time was significantly affected by the order of the trap.
For all cases considered in~\cite{Volkov_Myachkova_Pechen_2025}, the subspace $\mathfrak{H}^{1}$ has  codimension 2. Thus, a hypothesis naturally arises that optimization is influenced not only by the order of the trap, but also by the structure of the control subspace $\mathfrak{H}^{1}$ which determines the order of the trap. Hence we expect that the significant differences between the cases of systems $S_1$ and $S_2$ can be related to the different codimensions of the subspace $\mathfrak{H}^{1}$ for these systems.

\begin{figure}
\includegraphics[width=\linewidth]{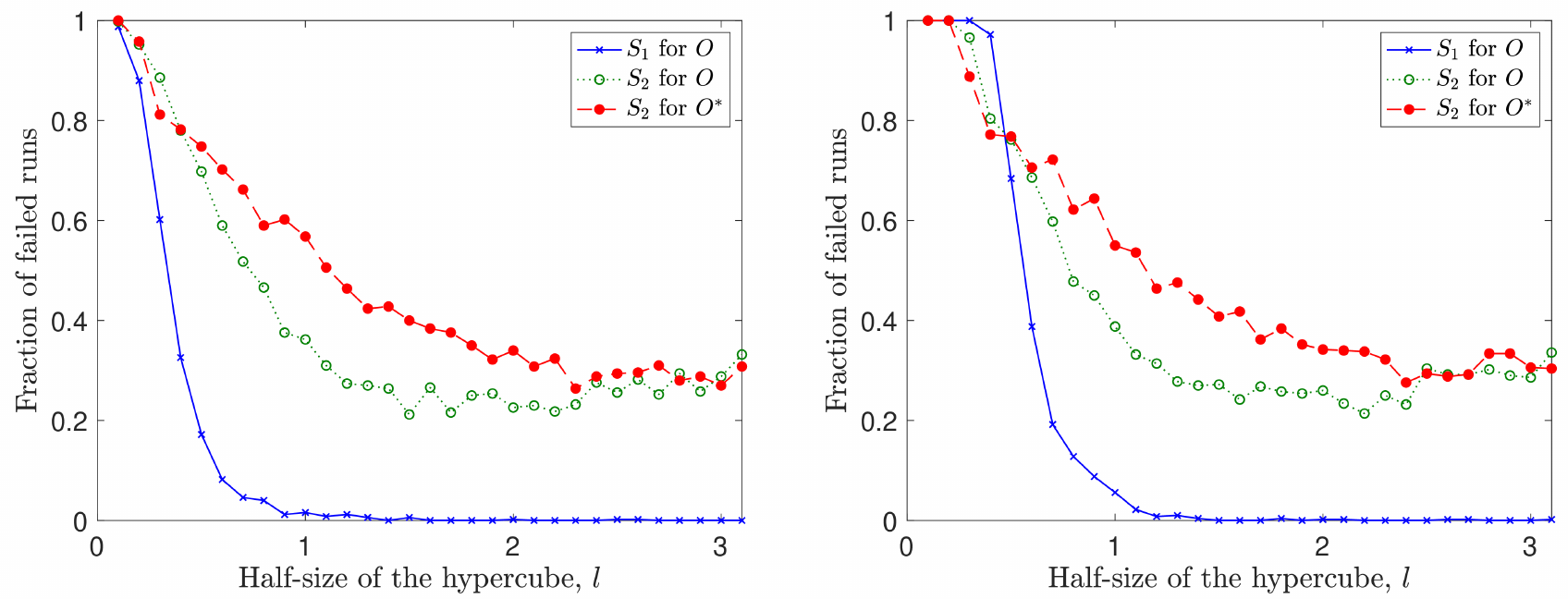}
\caption{Fraction of failed runs for $l=0.1,0.2,0.3,\dots,3.1$ for the system $S_1$ (blue solid line) and $S_2$ (green dotted and red dashed lines for the two target observables $O$ and $O^*$, respectively). Left: Unconstrained optimization using GRAPE. Right: Constrained optimization using GPM. The target observable is $O=\mathrm{diag}(1,-6,-6,0)$ for $S_1$ and $S_2$, and also $O^*=\mathrm{diag}(1,6,-6,0)$ for $S_2$. The initial state is $\rho_0=|4\rangle \langle 4|$. The parameters are $T=20$, $M=100$,  $K_{\mathrm{stop}} = 2000$, $\varepsilon = 0.02$, and $I_{\mathrm{err}} = 10^{-3}$. For each point $L=500$ runs are performed. The behaviour of the fraction of failed runs for both systems in essentially different.}
\label{Fig:S1S2}
\end{figure}

\begin{figure}
\includegraphics[width=\linewidth]{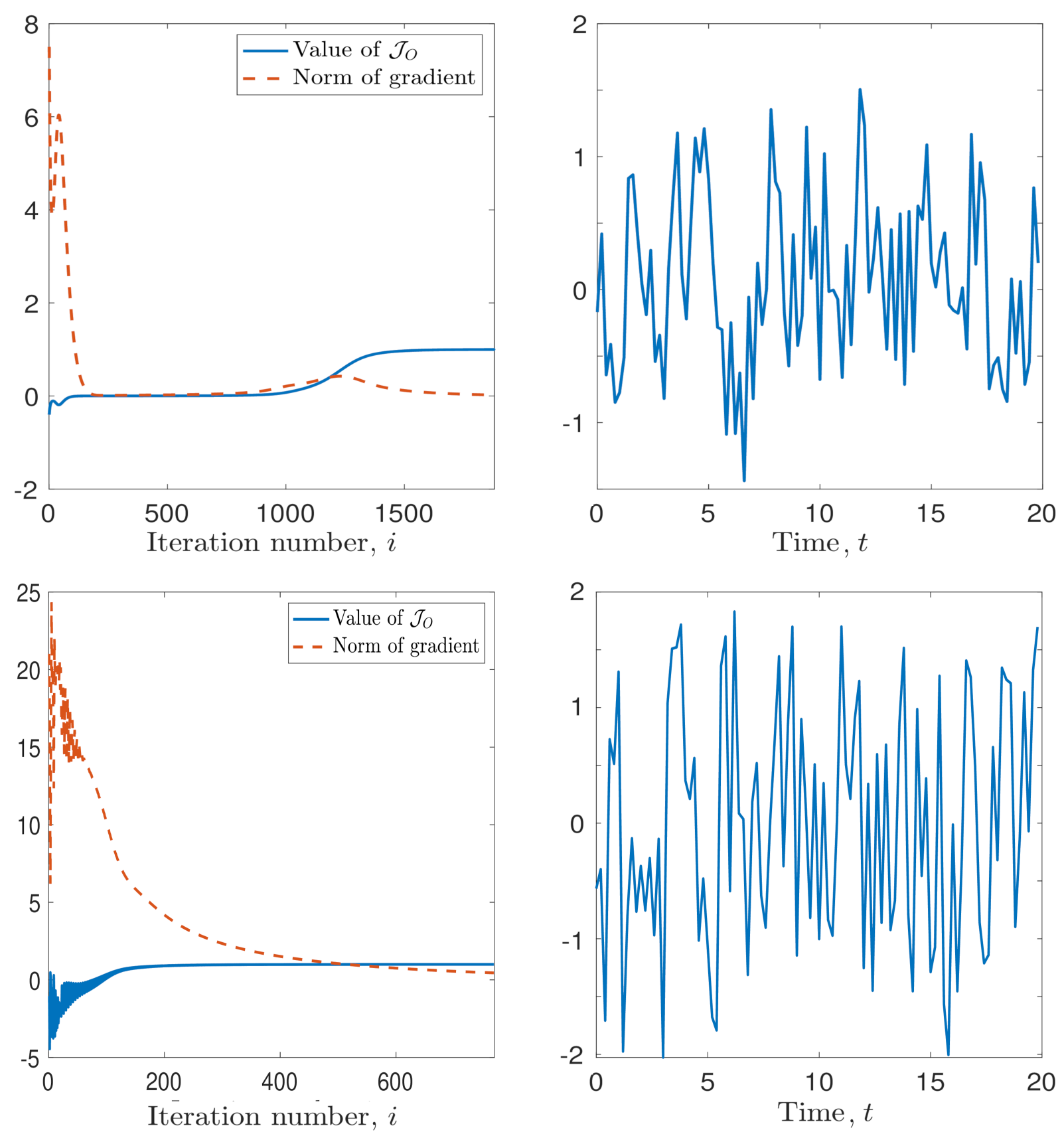}
\caption{An example of the optimization results with target observable $O$. Upper row for the system $S_1$ and $l=1$. Bottom row for the system $S_2$ and $l=2$.  Other parameters are the same as for Fig.~\ref{Fig:S1S2}. Left: behavior of the objective value and norm of the gradient vs iteration number. Totally 1880 iterations were required for $S_1$ and 770 for $S_2$. Right: The obtained optimized controls vs time.}
\label{Fig:S1S2Grad}
\end{figure}

\section{GPM based analysis of the constrained quantum control landscapes}
\label{Sec:GPM}

In the analysis above, the initial randomly generated controls were constrained to the hypercube $[-l,l]^M$, but during GRAPE optimization the controls were not restricted to this hypercube and were allowed to escape it. An interesting question is about constrained control landscape properties, when the controls should be restricted to the hypercube during {\it all} iterations of the GRAPE run. For this case, one can use gradient projection method (GPM) known in optimization theory~\cite{Goldstein1964,LevitinPolyak1966} and developed for constrained optimization of quantum systems in~\cite{Oza2005} (summarized in Section~5 and Appendix~B). In~\cite{Oza2005}, one considers the most general case when controls are represented by Kraus maps and experimentalist is allowed to apply different Kraus maps to the system. It can be done by using coherent and incoherent controls, quantum measurements, or other actions. Then, parameterizing Kraus maps by points of a complex Stiefel manifold, the constrained optimization of quantum systems is considered as optimization over complex Stiefel manifolds with additional constraints, where Kraus maps are used as explicit controls to manipulate quantum systems. In~\cite{Oza2005}, constrained gradient projection optimization for open quantum systems was considered in full generality, including theoretical derivation of the projection and numerical simulations, when coherent and incoherent controls are represented in the kinematic description as Kraus maps. Recently, GPM was used for control of some physical models of quantum systems~\cite{MorzhinPechenJPA2025}.

The GPM based method for the analysis of quantum control landscapes was developed in~\cite{PechenLJM2025}. In this case, in difference from the unconstrained GRAPE based analysis, for the constraint  $|C(k)|\le l$ that corresponds to $Q=\{C\in \mathbb R^M\,:\, |C(k)|\le l \textrm{ for } k=1,\dots,M\}$ being a hypercube in $\mathbb R^M$, the control on $i+1$ iteration is obtained using the projection ${\rm Pr}_Q$ on $Q$ as
\[
C_{i+1}={\rm Pr}_Q(C_i+\epsilon_i{\rm grad}{\mathcal J_O(C_i)}),
\]
Explicitly is has the form
\begin{equation}\label{Eq:Projection1}
    C_{i+1}(k)=\left\{
    \begin{array}{ll}
    -A, &  \textrm{ if } C_{i}(k)+\epsilon_i \frac{\partial \mathcal J_O(C_i)}{\partial C_i(k)}\le -l,\\
    C_i(k)+\epsilon_i\frac{\partial \mathcal J_O(C_i)}{\partial C_i(k)},     & \textrm{ if } -l< C_{i}(k)+\epsilon_i\frac{\partial \mathcal J_O(C_i)}{\partial C_i(k)}< l,\\
    A, &  \textrm{ if } C_{i}(k)+\epsilon_i \frac{\partial \mathcal J_O(C_i)}{\partial C_i(k)}\ge l.
    \end{array}\right.
\end{equation}
Here for the gradient we can use either the exact or the linear approximation. In the numerical analysis below, similarly to the unconstrained case we choose to use the linear form~(\ref{gradient}). 

The results are shown on the right subplot of Fig.~\ref{Fig:S1S2}. Qualitatively the behavior of the fractions of failed runs is similar to the unconstrained case with GRAPE optimization, but quantitatively it differs, as the fraction of failed runs for GPM constrained optimization is typically higher than that for GRAPE optimization. However, the general conclusion about different behavior of optimization efficiency for these two kinds of systems is the same.

\section{Conclusions}\label{Sec:4}
We study both theoretically and numerically quantum control landscapes for maximizing mean value of a Hermitian observable for special four-level quantum systems for which null control is proved to be a five-order trap. After theoretically establishing the trapping behavior, we use a numerical method based on the GRAPE algorithm to investigate the level of efficiency of optimizing the control objective over the quantum control landscape depending on the average distance of the initial controls from the the null control. Starting from random piecewise constant controls uniformly generated in some hypercube around the higher-order trap, we compute the fraction of GRAPE runs which are not successful, i.e. which do not allow to obtain a predefined high enough value of the fidelity within a maximum allowed number of iterations. We study the dependence of this fraction on the size of the hypercube which determines how far on average the initial controls are from the higher-order trap. Both cases of unconstrained optimization using GRAPE and constrained optimization using GPM are considered. While for the system $S_1$ the behavior of the fraction of  failed runs in similar to the $\Lambda$-type system~\cite{Volkov_Myachkova_Pechen_2025}, the discovered in this work behavior of the fraction of failed runs, and hence of the optimization efficiency, for a $S_2$ ladder system which was theoretically studied in~\cite{VolkovPechenUMN} is essentially different --- the increase in the optimization efficiency is much less and slower when moving away from the null control five-order trap, and even the optimization efficiency starts to decrease at some distance from the null control, --- that requires a further detailed investigation. This sharp difference might be related to the structure of the control space as a chain of embedded sets $\mathfrak{H}^{0}\supset \mathfrak{H}^{1}\supset \ldots$, which determines the order of the trap order, and in particular by the subspace where $\mathfrak{H}^{1}$ where second derivative of the objective functional is zero.

\funding{This work was supported by the RSF grant no. 22-11-00330-P, \url{https://rscf.ru/project/22-11-00330/}}

\section*{Data availability statement}
All data that support the findings of this study are included within the article (and any supplementary files).

\end{document}